\theoremstyle{plain}
\newtheorem{theorem}{Theorem}
\newtheorem{lemma}[theorem]{Lemma}
\newcommand{\emptystr}{\varepsilon}
\newcommand{\depth}{\mathsf{d}}
\newcommand{\str}{\mathsf{str}}
\newcommand{\strs}{\mathsf{s}}
\newcommand{\strt}{\mathsf{t}}
\newcommand{\Nodes}{\mathit{V}}
\newcommand{\NNodes}{\hat{\mathit{V}}}
\newcommand{\DMN}{\mathit{D}}
\newcommand{\DMNRoot}{\mathit{D}_{\mathsf{root}}}
\newcommand{\gtext}{\mathit{T}}
\newcommand{\rST}{\mathcal{T}}
\newcommand{\Substr}{\mathit{S}} 
\newcommand{\ddiff}{\alpha} 
\newcommand{\LCP}{\mathsf{lcp}} 
\newcommand{\ramsort}{\mathit{C}_{\mathsf{sort}}} 
\newcommand{\bnode}{\mathsf{b}}
\newcommand{\rfr}{\mathsf{rem}}
\newcommand{\idtt}[1]{\ensuremath{\mathtt{#1}}}
\title{Substring Complexities on Run-length Compressed Strings}
\author{
  Akiyoshi~Kawamoto\\
  {Kyushu Institute of Technology, Japan}\\
  {\texttt{kawamoto.akiyoshi256@mail.kyutech.jp}}\\
  \\
  Tomohiro~I\\
  {Kyushu Institute of Technology, Japan}\\
  {\texttt{tomohiro@ai.kyutech.ac.jp}}\\
}
\date{}
\begin{document}

\maketitle

\begin{abstract}
Let $\Substr_{T}(k)$ denote the set of distinct substrings of length $k$ in a string $T$,
then the $k$-th substring complexity is defined by its cardinality $|\Substr_{T}(k)|$.
Recently, $\delta = \max \{ |\Substr_{T}(k)| / k : k \ge 1 \}$ is shown to be a good compressibility measure of highly-repetitive strings.
In this paper, given $T$ of length $n$ in the run-length compressed form of size $r$,
we show that $\delta$ can be computed in $\mathit{C}_{\mathsf{sort}}(r, n)$ time and $O(r)$ space,
where $\mathit{C}_{\mathsf{sort}}(r, n) = O(\min (r \lg\lg r, r \lg_{r} n))$ is the time complexity 
for sorting $r$ $O(\lg n)$-bit integers in $O(r)$ space in the Word-RAM model with word size $\Omega(\lg n)$.
\end{abstract}

\section{Introduction}\label{sec:intro}
Data compression has been one of the central topics in computer science and recently
becomes much more important as data continues growing faster than ever before.
One data category that is rapidly increasing is \emph{highly-repetitive strings}, which have many long common substrings.
Typical examples of highly-repetitive strings are genomic sequences collected from similar species and versioned documents.
It is known that highly-repetitive strings can be compressed much smaller than entropy-based compressors
by utilizing repetitive-aware data compressions such as LZ76~\cite{1976LempelZ_ComplOfFinitSequen_TIT}, 
bidirectional macro scheme~\cite{1982StorerS_DataComprViaTexturSubst},
grammar compression~\cite{Kieffer2000GrammarBasedCodes},
collage system~\cite{Kida2003Csu}, and
run-length compression of Burrows-Wheeler Transform~\cite{Burrows1994BWT}.
Moreover, many efforts have been devoted for efficient algorithms to conduct useful 
operations on compressed data (without explicitly decompressing it) such as random access and string pattern matching.
We refer readers to the comprehensive survey~\cite{2021Navarro_IndexHighlRepetStrinCollec_I,2021Navarro_IndexHighlRepetStrinCollec_II} for recent developments in this area.

Since the compressibility of highly-repetitive strings is not captured by the information entropy,
how to measure compressibility for highly-repetitive strings is a long standing question.
Addressing this problem, Kempa and Prezza~\cite{2018KempaP_AtRootsOfDictionCompr_STOC} proposed a new concept of string attractors;
a set of positions $\Gamma$ is called an attractor of a string $\gtext$ iff
any substring of $\gtext$ has at least one occurrence intersecting with a position in $\Gamma$.
They showed that there is a string attractor behind existing repetitive-aware data compressions and 
the size $\gamma$ of the smallest attractor lower bounds and well approximates the size of them.
In addition, this and subsequent studies revealed that it is possible to design ``universal'' compressed data structures
built upon any string attractor $\Gamma$ to support operations such as random access and string pattern matching~\cite{2018KempaP_AtRootsOfDictionCompr_STOC,2019Prezza_OptimRankAndSelecQueries_CPM,2019NavarroP_UniverComprTextIndex}.
A drawback of string attractors is that computing the size $\gamma$ of the smallest string attractors is NP-hard~\cite{2018KempaP_AtRootsOfDictionCompr_STOC}.
As a substitution of $\gamma$, Christiansen et al.~\cite{2021ChristiansenEKNP_OptimTimeDictionComprIndex} proposed another 
repetitive-aware compressibility measure $\delta$ defined as the maximum of normalized substring complexities $\max \{ |\Substr_{\gtext}(k)| / k : k \ge 1 \}$,
where $\Substr_{\gtext}(k)$ denotes the set of substrings of length $k$ in $\gtext$.
They showed that $\delta \le \gamma$ always holds and there is an $O(\delta \lg (n / \delta))$-size data structure for 
supporting efficient random access and string pattern matching.

Although substring complexities were used in~\cite{2013RaskhodnikovaRRS_SublinAlgorForApproxStrin} to approximate the number of LZ76 phrases in sublinear time,
it is very recent that $\delta$ gains an attention as a gold standard of repetitive-aware compressibility measures~\cite{2021ChristiansenEKNP_OptimTimeDictionComprIndex}.
Recent studies have shown that $\delta$ possesses desirable properties as a compressibility measure.
One of the properties is the robustness.
For example, we would hope for a compressibility measure to monotonically increase while appending/prepending characters to a string.
It was shown that $\delta$ has this monotonicity~\cite{2020KociumakaNP_TowarDefinMeasurOfRepet_LATIN} 
while $\gamma$ is not~\cite{2021MantaciRRRS_CombinViewOnStrinAttrac}.
Akagi et al.~\cite{2021AkagiFI_SensitOfStrinComprAnd_X} studied the sensitivity of repetitive measures 
to edit operations and the results confirmed the robustness of $\delta$.

From algorithmic aspects, for a string of length $n$, $\delta$ can be computed in $O(n)$ time and space~\cite{2021ChristiansenEKNP_OptimTimeDictionComprIndex}.
This contrasts with $\gamma$ and the smallest compression size in some compression schemes 
like bidirectional macro scheme, grammar compression and collage system, which are known to be NP-hard to compute.
Exploring time-space tradeoffs, Bernardini et al.~\cite{2020BernardiniFGP_SubstComplInSublinSpace_X} presented 
an algorithm to compute $\delta$ with sublinear additional working space.
Efficient computation of $\delta$ has a practical importance for 
the $O(\delta \lg (n / \delta))$-size data structure of~\cite{2021ChristiansenEKNP_OptimTimeDictionComprIndex} 
as once we know $\delta$, we can built the data structure 
while reading $\gtext$ $O(\lg (n / \gamma))$ expected times in a streaming manner with main-memory space bounded in terms of $\delta$.

In this paper, we show that $\delta$ can be computed in $\ramsort(r, n)$ time and $O(r)$ space,
where $\ramsort(r, n)$ is the time complexity 
for sorting $r$ $O(\lg n)$-bit integers in $O(r)$ space in the Word-RAM model with word size $\Omega(\lg n)$.
We can easily obtain $\ramsort(r, n) = O(r \lg r)$ by comparison sort and 
$\ramsort(r, n) = O(r \lg_r n)$ by radix sort that uses $\Theta(r)$-size buckets.
Plugging in more advanced sorting algorithms for word-RAM model, $\ramsort(r, n) = O(r \lg \lg r)$~\cite{2004Han_DeterSortinInOEmphn}.
In randomized setting, 
$\ramsort(r, n) = O(r)$ if $\lg n = \Omega(\lg^{2+\epsilon} r)$ for some fixed $\epsilon > 0$~\cite{1998AnderssonHNR_SortinInLinearTime},
and otherwise $\ramsort(r, n) = O(r \sqrt{\lg \frac{\lg n}{\lg r}})$~\cite{2002HanT_IntegSortinIn0N_FOCS}.

\section{Preliminaries}\label{sec:prelim}

Let $\Sigma$ be a finite {\em alphabet}.
An element of $\Sigma^*$ is called a {\em string} over $\Sigma$.
The length of a string $w$ is denoted by $|w|$. 
The empty string $\emptystr$ is the string of length 0,
that is, $|\emptystr| = 0$.
Let $\Sigma^+ = \Sigma^* - \{\emptystr\}$.
The concatenation of two strings $x$ and $y$ is denoted by $x \cdot y$ or simply $xy$.
When a string $w$ is represented by the concatenation of strings $x$, $y$ and $z$ (i.e. $w = xyz$), 
then $x$, $y$ and $z$ are called a \emph{prefix}, \emph{substring}, and \emph{suffix} of $w$, respectively.
The $i$-th character of a string $w$ is denoted by $w[i]$ for $1 \leq i \leq |w|$,
and the substring of a string $w$ that begins at position $i$ and
ends at position $j$ is denoted by $w[i..j]$ for $1 \leq i \leq j \leq |w|$,
i.e., $w[i..j] = w[i]w[i+1] \dots w[j]$.
For convenience, let $w[i..j] = \emptystr$ if $j < i$.
For two strings $x$ and $y$, let $\LCP(x, y)$ denote the length of the longest common prefix between $x$ and $y$.
For a character $c$ and integer $e \ge 0$, let $c^e$ denote the string consisting of a single character $c$ repeated by $e$ times.

A substring $w[i..j]$ is called a \emph{run} of $w$ iff it is a maximal repeat of a single character,
i.e., $w[i-1] \neq w[i] = w[i+1] = \dots = w[j-1] = w[j] \neq w[j+1]$ (the inequations for $w[i-1]$ and/or $w[j+1]$ are ignored if they are out of boundaries).
We obtain the run-length encoding of a string by representing each run by a pair of character and the number of repeats in $O(1)$ words of space.
For example, a string \idtt{aabbbaabb} has four runs \idtt{aa}, \idtt{bbb}, \idtt{aa} and \idtt{bb},
and its run-length encoding is $(\idtt{a}, 2), (\idtt{b}, 3), (\idtt{a}, 2), (\idtt{b}, 2)$ or we just write as $\idtt{a}^2\idtt{b}^3\idtt{a}^2\idtt{b}^2$.

Throughout this paper, we refer to $\gtext$ as a string over $\Sigma$ of length $n$ with 
$r$ runs and $\delta = \max \{|\Substr_{\gtext}(k)| / k : k \ge 1\}$.
Our assumption on computational model is the Word-RAM model with word size $\Omega(\lg n)$.
We assume that $\Sigma$ is interpreted as an integer with $O(1)$ words or $O(\lg n)$ bits,
and the order of two characters in $\Sigma$ can be determined in constant time.

Let $\rST$ denote the trie representing the suffixes of $\gtext$ that start with the run's boundaries in $\gtext$, 
which we call the \emph{r-suffix trie} of $\gtext$.
Note that $\rST$ has at most $r$ leaves and at most $r$ branching internal nodes, and thus, 
$\rST$ can be represented in $O(r)$ space by compacting non-branching internal nodes
and representing edge labels by the pointers to the corresponding fragments in the run-length encoded string of $\gtext$.
We call the compacted r-suffix trie the \emph{r-suffix tree}.
In order to work in $O(r)$ space, our algorithm actually works on the r-suffix tree of $\gtext$,
but our conceptual description will be made on the r-suffix trie considering that non-branching internal nodes remain.
A node is called \emph{explicit} when we want to emphasize that the node is present in the r-suffix tree, too.

Let $\Nodes$ denote the set of nodes of $\rST$.
For any $v \in \Nodes$, let $\str(v)$ be the string obtained by concatenating the edge labels from the root to $v$,
and $\depth(v) = |\str(v)|$.
We say that $v$ represents the string $\str(v)$ and sometimes identify $v$ by $\str(v)$ if it is clear from the context.
For any $v \in \Nodes$ and $1 \le k \le \depth(v)$, let $\str_k(v)$ be the suffix of $\str(v)$ of length $k$.
Let $\NNodes$ denote the set of nodes that do not have a child whose string consists of a single character.
Figure~\ref{fig:r_suffix_trie} shows an example of $\rST$ for $\gtext = \idtt{aabbbaabbaaa}$.

\begin{figure}[t]
\begin{center}
  \includegraphics[scale=0.57]{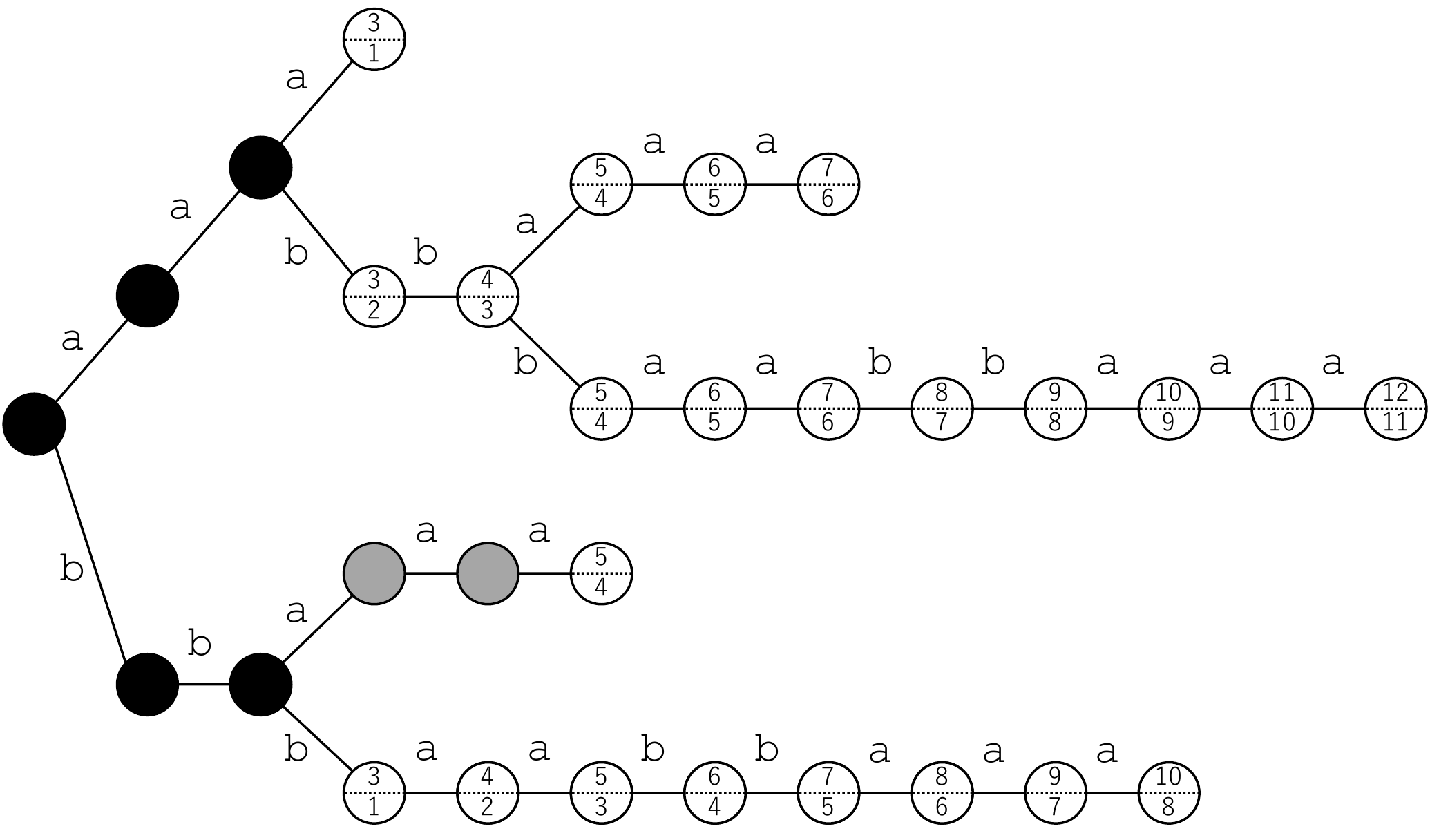}
  \caption{
    An example of r-suffix trie $\rST$ for $\gtext = \idtt{aabbbaabbaaa}$.
    The nodes in black $\{\emptystr, \idtt{a}, \idtt{aa}, \idtt{b}, \idtt{bb} \}$ are the nodes in $\Nodes - \NNodes$.
    With the notations introduced in Section~\ref{sec:properties},
    the nodes in gray \idtt{bba} and \idtt{bbaa} cannot be deepest matching nodes due to \idtt{bbba} and \idtt{bbbaa}, respectively.
    The other nodes belong to $\DMN$ and two integers in node $v \in \DMN$ represent $\depth(v)$ (upper integer) and $|\strt(v)|$ (lower integer), 
    where $\strt(v)$ is defined in Section~\ref{sec:properties}.
  }
\label{fig:r_suffix_trie}
\end{center}
\end{figure}

\section{Connection between $\Substr_{\gtext}(k)$ and the r-suffix trie}\label{sec:properties}
We first recall a basic connection between $\Substr_{\gtext}(k)$ and the nodes of the \emph{suffix trie} of $\gtext$
that is the trie representing ``all'' the suffixes of $\gtext$~\cite{1973Weiner_LinearPatterMatchAlgor,GusfieldBook}.
Since a $k$-length substring $w$ in $\gtext$ is a $k$-length prefix of some suffix of $\gtext$,
$w$ can be uniquely associated with the node $v$ representing $w$ by the path from the root to $v$.
Thus, the set $\Substr_{\gtext}(k)$ of substrings of length $k$ is captured by the nodes of depth $k$.
This connection is the basis of the algorithm presented in~\cite[Lemma 5.7]{2021ChristiansenEKNP_OptimTimeDictionComprIndex}
to compute $\delta$ in $O(n)$ time and space.

We want to establish a similar connection between $\Substr_{\gtext}(k)$ and the nodes of the r-suffix trie $\rST$ of $\gtext$.
Since only the suffixes that start with run's boundaries are present in $\rST$,
there could be a substring $w$ of $\gtext$ that is not represented by a path from the root to some node.
Still, we can find an occurrence of any non-empty substring $w$ in a path starting from a node in $\Nodes - \NNodes$:
Suppose that $w = \gtext[i..j]$ is an occurrence of $w$ in $\gtext$ and $i' \le i$ is the starting position of the run containing $i$, then
there is a node $v$ such that $\str(v) = \gtext[i'..j]$ and $\str_{|w|}(v) = w$.
Formally, we say that node $v$ is a \emph{matching node} for a string $w$ iff $\str(v) = w[1]^e w$ for some integer $e \ge 0$.
Note that there could be more than one matching nodes for a string $w$, but
the \emph{deepest matching node} for $w$ is unique because 
two distinct matching nodes must have different values of $e$.
The following lemma summarizes the above discussion.
\begin{lemma}\label{lem:dmn}
  For any substring $w$ of $\gtext$, there is a unique deepest matching node in $\rST$.
\end{lemma}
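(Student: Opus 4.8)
The plan is to prove existence and uniqueness of the deepest matching node separately, since each reduces to one of the two observations already foreshadowed in the surrounding text.

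For existence, I would start from an arbitrary occurrence $w = \gtext[i..j]$ (which exists because $w$ is a substring of $\gtext$), and let $i' \le i$ be the starting position of the run containing position $i$. Since that whole run consists of the character $\gtext[i] = w[1]$, we have $\gtext[i'..i-1] = w[1]^{e}$ with $e = i - i'$, and therefore $\gtext[i'..j] = w[1]^{e}\, w$. Because $i'$ is a run boundary, the suffix $\gtext[i'..n]$ is one of the suffixes recorded in $\rST$, so its prefix $\gtext[i'..j]$ is spelled by some node $v \in \Nodes$ with $\str(v) = w[1]^{e}\, w$. By definition $v$ is then a matching node for $w$, so the set of matching nodes is non-empty.

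For uniqueness of the deepest one, I would argue purely from the depth. Any matching node $v$ satisfies $\str(v) = w[1]^{e}\, w$, and here $e$ is forced to equal $\depth(v) - |w|$ (two different exponents would give strings of different lengths), so each matching node carries a well-defined value of $e$ with $\depth(v) = e + |w|$. Conversely, since a node of $\rST$ is uniquely determined by the string it spells, the value of $e$ determines the node, so distinct matching nodes carry distinct $e$ and hence distinct depths. As the matching nodes are finitely many (indeed $e \le n - |w|$) and non-empty, there is exactly one of maximum depth.

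The only point that deserves care, and the place the statement could feel slippery, is that the matching nodes for $w$ need not lie on a single root-to-node path: for a $w$ whose leading block of $w[1]$'s is followed by a different character, the candidates such as $\idtt{ab}, \idtt{aab}, \idtt{aaab}, \dots$ pairwise diverge immediately after that leading block and thus sit on different branches of $\rST$. Consequently ``deepest'' cannot be read as ``furthest along one path''; what actually makes it well-defined is the bijection between matching nodes and their exponents $e$, which forces all their depths to be distinct. (The empty string is handled trivially by the root.) Everything else is routine bookkeeping, so I do not anticipate a genuine obstacle beyond stating the depth argument cleanly.
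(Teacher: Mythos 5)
Your proof is correct and follows essentially the same route as the paper, which establishes existence via an occurrence $\gtext[i..j]$ extended to the start $i'$ of its run (so that $\gtext[i'..j] = w[1]^e w$ is a prefix of a suffix stored in $\rST$), and uniqueness by noting that two distinct matching nodes must carry different exponents $e$ and hence different depths. Your added remark that the matching nodes generally lie on different branches, so that ``deepest'' is justified by distinct depths rather than by a common path, is a correct and useful clarification of the same argument.
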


The next lemma implies that a deepest matching node for some string is in $\NNodes$.
\begin{lemma}\label{lem:dmn_in_nnodes}
  A node $v \notin \NNodes$ cannot be a deepest matching node for any string.
\end{lemma}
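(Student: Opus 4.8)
The plan is to combine the structural description of $\Nodes - \NNodes$ with the definition of (deepest) matching nodes, and then, for every string witnessing $v$ as a matching node, to exhibit a strictly deeper matching node for that same string.

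First I would pin down what $v \notin \NNodes$ means. By definition $v$ has a child $u$ whose string consists of a single (repeated) character, say $\str(u) = c^{e}$ with $c \in \Sigma$ and $e \ge 1$. Since $\str(u)$ is obtained from $\str(v)$ by appending one character in $\rST$, its prefix $\str(v) = c^{e-1}$ is again a power of $c$ (the root $\emptystr$ when $e = 1$). Writing $j = e - 1$, this shows $\str(v) = c^{j}$ for some $c$ and $j \ge 0$, and moreover that $\rST$ contains the child $u$ with $\str(u) = c^{j+1}$.

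Next I would determine all strings for which $v$ can be a matching node. Recall that $v$ is a matching node for $w$ iff $\str(v) = w[1]^{e'} w$ for some $e' \ge 0$. Because $\str(v) = c^{j}$ consists solely of $c$'s, any nonempty such $w$ must have the form $w = c^{m}$ with $1 \le m \le j$, in which case $e' = j - m$. (If $v$ is the root, then $j = 0$ and there is no nonempty $w$, so the claim holds vacuously.)

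Finally, I would fix any such $w = c^{m}$ and show that the child $u$ is a strictly deeper matching node for the very same $w$: since $\str(u) = c^{j+1} = c^{(j+1-m)} c^{m}$ and $w[1] = c$, we have $\str(u) = w[1]^{\,e'+1} w$ with $e' + 1 \ge 0$, so $u$ is a matching node for $w$, and $\depth(u) = j + 1 > j = \depth(v)$. Hence $v$ is never the deepest matching node for $w$; as $w$ ranged over all strings admitting $v$ as a matching node (and $v$ is trivially not deepest for strings it does not match at all), $v$ cannot be a deepest matching node for any string. I expect the only subtle point to be the first step—arguing cleanly that possessing a single-character child forces $\str(v)$ itself to be a (possibly empty) unary string—after which the one-step descent argument is immediate.
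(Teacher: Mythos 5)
Your proof is correct and takes essentially the same approach as the paper: identify the child $u$ with $\str(u) = c^{e}$, deduce $\str(v) = c^{e-1}$, and observe that $u$ is a strictly deeper matching node for every (necessarily unary) string that $v$ matches. You simply make explicit the final step that the paper leaves implicit, namely that every string for which $v$ is a matching node is a suffix $c^{m}$ of $\str(v)$, plus the vacuous root case.
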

\begin{proof}
  Since $v \notin \NNodes$, there exists a child $v'$ of $v$ such that $\str(v') = c^{e}$ for some character $c$ and integer $e > 0$.
  It also implies that $\str(v) = c^{e-1}$.
  Thus, $v'$ is always a deeper matching node for any suffix of $\str(v)$ and the claim of the lemma follows.
\end{proof}

Let $\DMN \subseteq \NNodes$ denote the set of deepest matching nodes for some strings,
and $\DMN_{k} \subseteq \DMN$ denote the set of deepest matching nodes for some strings of length $k$.
For fixed $k$, it is obvious that a node $v$ in $\DMN_{k}$ is the deepest matching node for a unique substring of length $k$, which is $\str_{k}(v)$.
Together with Lemma~\ref{lem:dmn}, there is a bijection between $\Substr_{\gtext}(k)$ and $\DMN_{k}$, which leads to the following lemma.
\begin{lemma}\label{lem:bijection}
  For any $1 \le k \le n$, $|\Substr_{\gtext}(k)| = |\DMN_{k}|$.
\end{lemma}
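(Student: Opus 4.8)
The plan is to exhibit an explicit bijection $\phi \colon \Substr_{\gtext}(k) \to \DMN_{k}$ and then read off the equality of cardinalities. I would define $\phi$ by sending each substring $w \in \Substr_{\gtext}(k)$ to its deepest matching node in $\rST$, which exists and is unique by Lemma~\ref{lem:dmn}. To see that $\phi$ is well-defined as a map into $\DMN_{k}$, observe that $\phi(w)$ is by construction a deepest matching node for a string of length $k$ (namely $w$), so it belongs to $\DMN_{k}$ by the definition of $\DMN_{k}$.

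First I would verify injectivity. Suppose $\phi(w_1) = \phi(w_2) = v$ for $w_1, w_2 \in \Substr_{\gtext}(k)$. Since $v$ is a matching node for each $w_i$, we have $\str(v) = w_i[1]^{e_i} w_i$ for some $e_i \ge 0$, so $w_i$ is the suffix of $\str(v)$ of length $k$; that is, $w_1 = \str_{k}(v) = w_2$. Hence $\phi$ is injective, and the same computation confirms that a node in $\DMN_{k}$ is the deepest matching node for the single string $\str_{k}(v)$.

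Next I would verify surjectivity. Take any $v \in \DMN_{k}$. By definition $v$ is the deepest matching node for some string $w$ of length $k$, and as above $w = \str_{k}(v)$. Because $v$ is a matching node, $\str(v) = w[1]^{e} w$ is a substring of $\gtext$ (it is a prefix of a suffix of $\gtext$ starting at a run's boundary), and therefore its length-$k$ suffix $w = \str_{k}(v)$ is a substring of $\gtext$, i.e.\ $w \in \Substr_{\gtext}(k)$. By the uniqueness in Lemma~\ref{lem:dmn}, $\phi(w) = v$, so $\phi$ is onto. Combining the two directions, $\phi$ is a bijection and $|\Substr_{\gtext}(k)| = |\DMN_{k}|$.

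The routine core---injectivity and surjectivity---falls out immediately once the map is written down, so the only point demanding care is the bookkeeping about codomains: checking on the one hand that $\phi(w)$ really lands in $\DMN_{k}$ rather than merely in $\DMN$, and on the other that $\str_{k}(v)$ really is a substring of $\gtext$ so that it lies in $\Substr_{\gtext}(k)$. The latter rests on the observation that $\str(v)$ is itself a substring of $\gtext$, after which the whole argument is a direct application of Lemma~\ref{lem:dmn} together with the definition of $\DMN_{k}$.
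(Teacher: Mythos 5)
Your proof is correct and takes essentially the same route as the paper: the paper establishes the lemma by noting that each $v \in \DMN_{k}$ is the deepest matching node for the unique length-$k$ string $\str_{k}(v)$, which together with Lemma~\ref{lem:dmn} yields the bijection $w \mapsto$ (deepest matching node of $w$). You have merely spelled out explicitly the well-definedness, injectivity, and surjectivity checks (including the codomain detail that $\str(v)$, hence $\str_{k}(v)$, is a substring of $\gtext$) that the paper dismisses as obvious.
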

By Lemma~\ref{lem:bijection}, $\delta$ can be computed by $\max\{ |\DMN_{k}| / k : k \ge 1\}$.

\begin{figure}[t]
\begin{center}
  \includegraphics[scale=0.57]{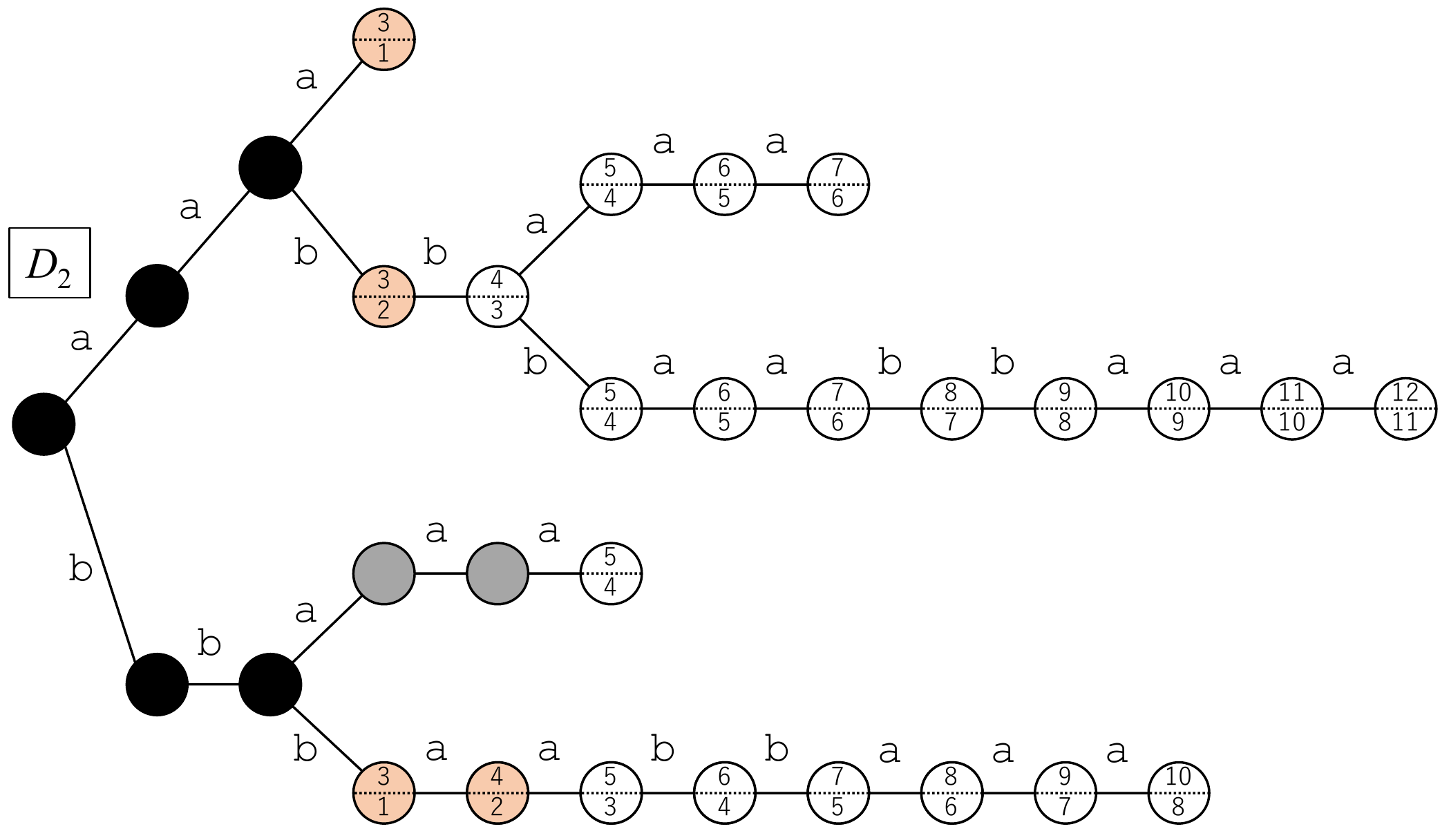}
  \caption{
    Displaying the nodes in $\DMN_{2}$ in orange filled color. 
    The nodes $\idtt{aaa}, \idtt{aab}, \idtt{bbb}$ and $\idtt{bbba}$ are respectively 
    the deepest matching nodes for $\idtt{aa}, \idtt{ab}, \idtt{bb}$ and $\idtt{ba}$.
  }
\label{fig:rst_k4}
\end{center}
\end{figure}

Next we study some properties on $\DMN$, which is used to compute $\delta$ efficiently.
For any $v \in \NNodes$, let $\strs(v)$ denote the string that can be obtained by removing one character from the first run of $\str(v)$,
and let $\strt(v)$ be the string such that $\str(v) = \strs(v) \cdot \strt(v)$.
In other words, $\strt(v)$ is the shortest string for which $v$ is a matching node.
For example, if $\str(v) = \idtt{aabba}$ then $\strs(v) = \idtt{a}$ and $\strt(v) = \idtt{abba}$.
\begin{lemma}\label{lem:dmn_for_v}
  For any $v \in \NNodes$, $\{ k : v \in \DMN_{k} \}$ is $[|\strt(v)|..\depth(v)]$ or $\emptyset$.
\end{lemma}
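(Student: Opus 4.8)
The plan is to characterize exactly which lengths $k$ make $v$ a deepest matching node. Fix $v \in \NNodes$ and recall that $\strt(v)$ is the shortest string for which $v$ is a matching node, so $v$ is a matching node for $\str_k(v)$ precisely when $|\strt(v)| \le k \le \depth(v)$. For these $k$, the question reduces to whether $v$ is the \emph{deepest} such matching node, i.e.\ whether there is a strictly deeper node $u$ with $\str(u) = (\str_k(v)[1])^{e'} \str_k(v)$ for some larger $e'$. The claim is that the set of ``deepest'' lengths is an interval that is either all of $[|\strt(v)|..\depth(v)]$ or empty: it cannot be a proper nonempty subinterval and cannot have gaps.

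\medskip

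First I would argue the interval has no gaps (\emph{upward closure of failure}). Write $w_k = \str_k(v)$, so $w_k[1]$ is the leading character of the length-$k$ suffix of $\str(v)$. The key observation is that as $k$ increases toward $\depth(v)$, the suffix $w_k$ eventually reaches into the first run of $\str(v)$ only at the largest values; for $k < \depth(v)$ the leading character $w_k[1]$ is determined by the structure of $\str(v)$, and crucially a deeper matching node $u$ for $w_k$ corresponds to a node with a longer leading run of the character $w_k[1]$. I would show that if $v$ fails to be deepest for some length $k$ (witnessed by a deeper node $u$), then $v$ also fails for every $k' $ in the range with $k' \le k$, or symmetrically for every $k' \ge k$ — establishing that the failure set is itself an interval anchored at one end of $[|\strt(v)|..\depth(v)]$. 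Combined with the fact that $v$ \emph{is} a matching node throughout $[|\strt(v)|..\depth(v)]$ by definition of $\strt(v)$, monotonicity of the witness forces the ``deepest'' set to be a suffix or prefix of the full range.

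\medskip

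The decisive step is to pin down the behavior at the boundary and show that failure, once it occurs, propagates to exhaust the whole interval — so the ``deepest'' set is never a proper nonempty subinterval. Here I would use that a deeper matching node $u$ for $w_k$ has $\str(u) = c^{e'} w_k$ with $c = w_k[1]$ and $e' \ge 1$; the existence of $u$ in $\rST$ means $c^{e'} w_k$ occurs in $\gtext$, which in turn (because the run containing the leading $c$'s has a fixed length everywhere that $w_k$ begins) constrains the occurrences uniformly. Concretely, if the first run of $\str(v)$ has length $\ell$ and $v = \str(v)$ already realizes the maximal such run (so $v \in \DMN$ for the shortest length $|\strt(v)|$), then $v$ stays deepest for all longer $k$ because extending $w_k$ only adds characters on the right and never creates room for a longer leading run. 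Conversely, if even the shortest length $|\strt(v)|$ admits a deeper node, that same deeper node serves as a witness for all longer lengths as well, since a matching node for $w_{|\strt(v)|}$ with a longer leading run extends to a matching node for every $w_k$ with $k \ge |\strt(v)|$. Thus the whole range succeeds or the whole range fails.

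\medskip

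The main obstacle I anticipate is making the ``witness propagation'' rigorous across changing leading characters: when $k$ passes the point where $w_k$ starts to include characters from the first run of $\str(v)$, the leading character $w_k[1]$ can change, so a witness $u$ for one value of $k$ is not automatically a witness for another. I would handle this by observing (via Lemma~\ref{lem:dmn_in_nnodes} and the definition of $\strt(v)$) that for all $k \in [|\strt(v)|..\depth(v)]$ the leading character $w_k[1]$ is in fact constant — it is the first character of $\strt(v)$, equivalently the character of the first run of $\str(v)$ — precisely because removing more than one character from the first run would drop below $|\strt(v)|$. This collapses the apparent difficulty: a single character $c$ governs all the relevant lengths, so a longer-run witness for one length is genuinely a witness for all of them, and the all-or-nothing dichotomy follows cleanly.
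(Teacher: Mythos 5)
Your proof is correct and takes essentially the same approach as the paper's: both hinge on the observation that every suffix of $\str(v)$ whose length lies in $[|\strt(v)|..\depth(v)]$ is $\strt(v)$ with some number of copies of the single character $c=\strt(v)[1]$ prepended, so one deeper matching node $v'$ with $\str(v')=c^{e'}\strt(v)$ and $e'>|\strs(v)|$ simultaneously witnesses failure at every such length, giving the all-or-nothing dichotomy. The directional slip in your third paragraph (increasing $k$ adds characters on the \emph{left} of $\str_k(v)$, not the right) is immaterial, since your final witness-transfer argument is exactly the paper's.
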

\begin{proof}
  First of all, by definition it is clear that $v$ cannot be a matching node for a string of length shorter than $|\strt(v)|$ or longer than $\depth(v)$.
  Hence, only the integers in $[|\strt(v)|..\depth(v)]$ can be in $\{ k : v \in \DMN_{k} \}$,
  and the claim of the lemma says that $\{ k : v \in \DMN_{k} \}$ either takes them all or nothing.
  
  We prove the lemma by showing that if $v$ is not the deepest matching node for some suffix $w$ of $\str(v)$ with $|\strt(v)| \le |w| \le \depth(v)$,
  then $v$ is not the deepest matching node for any string.
  The assumption implies that there is a deeper matching node $v'$ for $w$.
  Let $c = \strt(v)[1]$, then $\str(v) = c^e \strt(v)$ with $e = |\strs(v)|$.
  Note that $w$ is obtained by prepending zero or more $c$'s to $\strt(v)$, and hence, $\str(v')$ is written as $c^{e'}w$ for some $e' > e$.
  Therefore, $v$ and $v'$ are matching nodes for any suffix of $\str(v)$ of length in $[|\strt(v)|..\depth(v)]$.
  Since $v'$ is deeper than $v$, the claim holds.
\end{proof}

\begin{lemma}\label{lem:child_is_dmn}
  For any $u \in \DMN$, any child $v$ of $u$ is in $\DMN$.
\end{lemma}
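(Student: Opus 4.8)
The plan is to exhibit, for a child $v$ of $u$, one explicit string for which $v$ is the deepest matching node; since $\DMN$ is exactly the set of nodes that are deepest matching for some string, this places $v$ in $\DMN$ immediately. The natural candidate is $\strt(u)\cdot a$, where $a$ is the single character labeling the trie edge from $u$ to $v$.

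First I would record what the hypothesis $u \in \DMN$ buys us. As $\DMN \subseteq \NNodes$, the string $\strt(u)$ is defined, and Lemma~\ref{lem:dmn_for_v} gives $\{k : u \in \DMN_{k}\} = [|\strt(u)|..\depth(u)]$, which is nonempty; taking the shortest length shows that $u$ is the (unique) deepest matching node for $\strt(u)$. Writing $\str(u) = c^{e}\,\strt(u)$ with $c = \strt(u)[1]$ and $e = |\strs(u)|$, this is precisely the statement that no node of $\rST$ represents $c^{e'}\,\strt(u)$ for any $e' > e$.

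Next I would turn to $v$. In the trie the edge $u \to v$ carries one character $a$, so $\str(v) = c^{e}\,\strt(u)\,a$. Setting $w = \strt(u)\,a$, we have $w[1] = c$ and $\str(v) = c^{e} w = w[1]^{e} w$, so $v$ is a matching node for $w$. To see that it is the deepest, suppose some node $x$ were a strictly deeper matching node, i.e.\ $\str(x) = c^{e'} w = c^{e'}\,\strt(u)\,a$ with $e' > e$. The crux is that the parent of $x$ in $\rST$ then represents $c^{e'}\,\strt(u)$, which is a genuine node (every prefix of a node string is a node string) and a matching node for $\strt(u)$ strictly deeper than $u$ — contradicting the previous paragraph. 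Hence no such $x$ exists, $v$ is the deepest matching node for $w$, and therefore $v \in \DMN$.

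The step I expect to carry the real content is this last reduction: transferring a hypothetical deeper matching node for $w = \strt(u)\,a$ down to a deeper matching node for $\strt(u)$ by deleting the trailing $a$. It rests on two small but essential facts — that matching nodes for a fixed string differ only in the number of leading copies of $c$ (so appending or removing the final character commutes with prepending $c$'s), and that every prefix of a node string is again a node string. Everything else, including verifying that $v$ is a matching node for $w$ with the same exponent $e$ as $u$, is bookkeeping. One may note in passing that $v \in \NNodes$ holds automatically, since otherwise $\str(v)$ would be a single run $c^{j}$ forcing its parent $u = c^{j-1}$ out of $\NNodes$; but the argument above does not rely on this, as it establishes $v \in \DMN$ directly.
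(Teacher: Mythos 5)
Your proof is correct and takes essentially the same approach as the paper's: the paper proves the contrapositive ($v \notin \DMN \Rightarrow u \notin \DMN$) while you argue the implication directly via contradiction, but both rest on the identical key step --- a matching node for $\strt(u)\cdot a = \strt(v)$ deeper than $v$ has a parent that is a matching node for $\strt(u)$ deeper than $u$ --- combined with the reduction of $\DMN$-membership to ``deepest matching node for $\strt(\cdot)$'' from Lemma~\ref{lem:dmn_for_v}. The difference is one of logical packaging, not substance.
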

\begin{proof}
  We show a contraposition, i.e., $u \notin \DMN$ if $v \notin \DMN$.
  Let $c = \strt(v)[1]$, then $\str(v) = c^e \strt(v)$ with $e = |\strs(v)|$.
  Note that, by Lemma~\ref{lem:dmn_for_v}, $v \in \DMN$ iff $v$ is the deepest matching node for $\strt(v)$.
  We assume that $u \in \NNodes$ since otherwise $u \notin \DMN$ is clear.
  Then, it holds that $u \in \DMN$ iff $u$ is the deepest matching node for $\strt(v)[1..|\strt(v)|-1]$.
  The assumption of $v \notin \DMN$ then implies that there is a deeper matching node $v'$ for $\strt(v)$ such that $\str(v') = c^{e'} \strt(v)$ with $e' > e$.
  Since the parent $u'$ of $v'$ is deeper than $u$ and a matching node for $\strt(v)[1..|\strt(v)|-1]$,
  we conclude that $u \notin \DMN$.
\end{proof}

By Lemma~\ref{lem:child_is_dmn}, we can identify some deepest matching node $v$ such that
all the ancestors of $v$ are not in $\DMN$ and all the descendants of $v$ are in $\DMN$.
We call such a node $v$ a \emph{DMN-root}, and let $\DMNRoot$ denote the set of DMN-roots.
We will later in Section~\ref{sec:algorithm} show how to compute $\DMNRoot$ in $\ramsort(r, n)$ time and $O(r)$ space.

Figure~\ref{fig:rst_k} shows how $\DMN_{k}$ changes when we increase $k$.
\begin{figure}[t]
\begin{center}
  \includegraphics[scale=0.3]{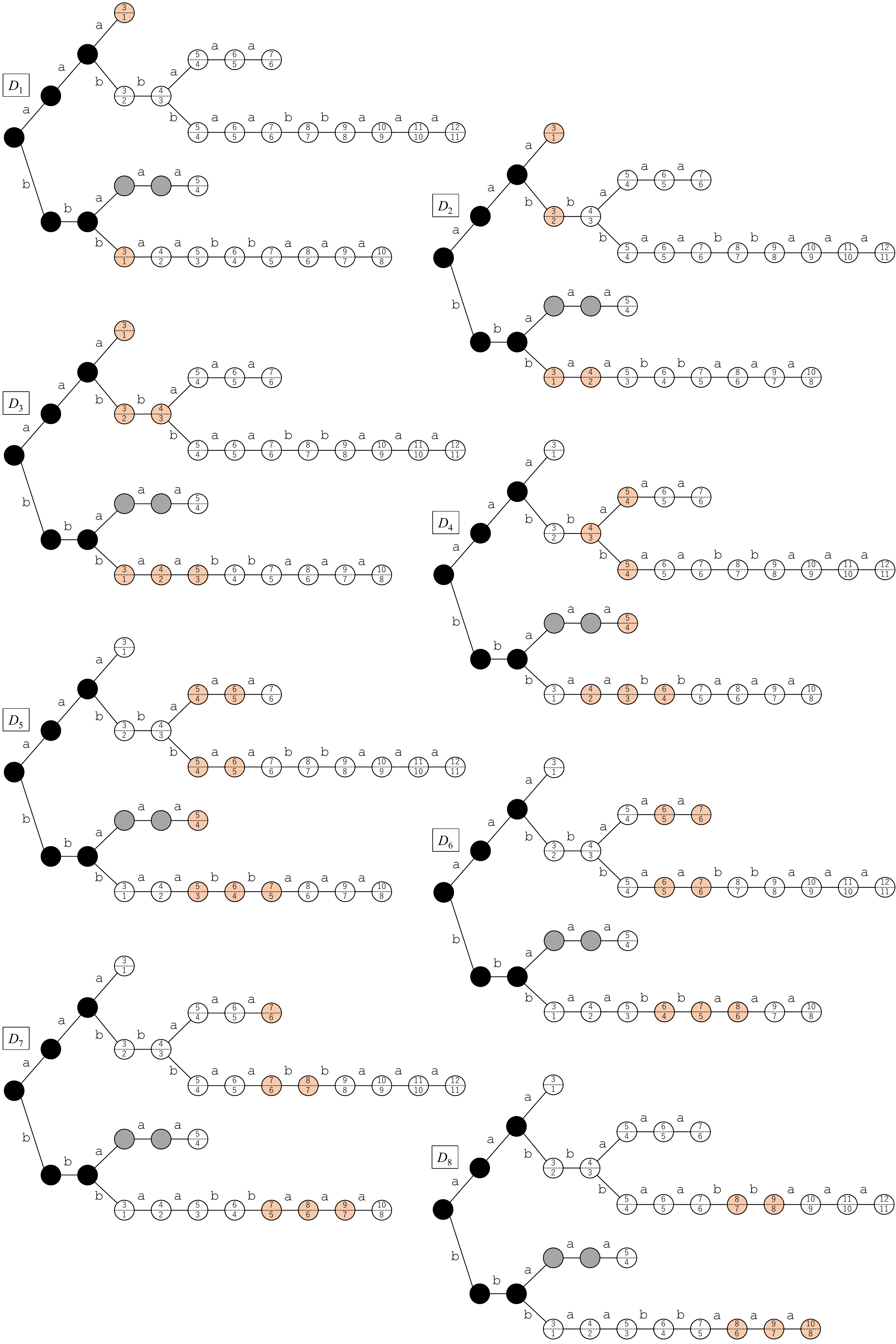}
  \caption{
    Displaying the nodes in $\DMN_{k}$ for $k = 1, 2, \dots, 8$ in orange filled color.
    Two integers in node $v \in \DMN$ represent $\depth(v)$ (upper integer) and $|\strt(v)|$ (lower integer).
    By Lemma~\ref{lem:dmn_for_v}, node $v \in \DMN$ belongs to $\DMN_{k}$ for any $k \in [|\strt(v)|..\depth(v)]$.
  }
\label{fig:rst_k}
\end{center}
\end{figure}

Now we focus on the difference $\ddiff_k = |\DMN_k| - |\DMN_{k-1}|$ between $|\DMN_k|$ and $|\DMN_{k-1}|$, and
show that we can partition $[1..n]$ into $O(r)$ intervals so that $\ddiff_k$ does not change in each interval.
\begin{lemma}\label{lem:diff}
  $|\{ k : \ddiff_{k+1} \neq \ddiff_{k} \}| = O(r)$.
\end{lemma}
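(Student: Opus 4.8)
The plan is to view $|\DMN_{k}|$ as the coverage function of a family of integer intervals, one per node of $\DMN$, and to bound the number of breakpoints of its first difference $\ddiff_k$ by grouping those intervals along the $O(r)$ edges of the r-suffix tree.

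First, I would rewrite $|\DMN_{k}|$ as an interval count. By Lemma~\ref{lem:dmn_for_v}, every $v \in \DMN$ satisfies $v \in \DMN_{k}$ exactly for $k \in [|\strt(v)|..\depth(v)]$, so
\[
  |\DMN_{k}| = |\{ v \in \DMN : |\strt(v)| \le k \le \depth(v) \}| = \sum_{v \in \DMN} [\, |\strt(v)| \le k \le \depth(v)\,].
\]
Subtracting the value at $k-1$, each $v$ contributes $+1$ at $k = |\strt(v)|$, $-1$ at $k = \depth(v)+1$, and nothing elsewhere (the two positions differ since $|\strt(v)| \le \depth(v)$), whence
\[
  \ddiff_{k} = |\DMN_{k}| - |\DMN_{k-1}| = \sum_{v \in \DMN} \big( [k = |\strt(v)|] - [k = \depth(v)+1] \big).
\]
Thus $\ddiff$ is a superposition of per-node contributions, and I would bound its breakpoints by grouping these contributions cleverly.

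Next I would partition the nodes of $\DMN$ according to the compacted edges of the r-suffix tree. Since the tree has $O(r)$ explicit nodes and hence $O(r)$ edges, and since $\DMN$ is closed under taking descendants by Lemma~\ref{lem:child_is_dmn}, the nodes of $\DMN$ lying in the interior of a single compacted edge form a contiguous chain $v_1, v_2, \dots, v_\ell$; treating the $O(r)$ explicit nodes as singleton chains, this gives $O(r)$ chains overall. Along such a chain $\depth$ increases by exactly one per step, so the values $\depth(v_i)$ are consecutive integers. Writing $|\strt(v)| = \depth(v) - (m_v - 1)$ with $m_v$ the length of the first run of $\str(v)$, the plan is to show $m_v$ is constant along the chain: the first-run length can grow only by extending a pure run $c^{m} \to c^{m+1}$, but then $c^{m}$ would have a child whose string consists of a single character and would lie outside $\NNodes \supseteq \DMN$, contradicting $v_i \in \DMN$ for the interior nodes. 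Hence the starts $|\strt(v_i)|$ are consecutive integers too.

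Given this, each chain contributes to $\ddiff$ the difference of the indicators of two integer intervals, namely the consecutive starts $\{|\strt(v_i)|\}$ and the consecutive shifted ends $\{\depth(v_i)+1\}$, which changes value only $O(1)$ times as a function of $k$. Since $\ddiff$ is the sum of these $O(r)$ chain contributions, and the number of breakpoints of a sum is at most the sum of the numbers of breakpoints, I conclude $|\{ k : \ddiff_{k+1} \neq \ddiff_{k} \}| = O(r)$. I expect the per-chain estimate to be the main obstacle: it hinges on both endpoints of the covering intervals advancing in lockstep along a chain, which is exactly where the constancy of the first-run length, guaranteed by $\DMN \subseteq \NNodes$, becomes essential.
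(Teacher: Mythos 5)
Your proof is correct, and it reaches the bound by a genuinely different decomposition than the paper's. The paper never writes $\ddiff_k$ as a superposition of per-node interval contributions; instead it studies the sets $\DMN_{k+1} - \DMN_{k}$ (nodes entering) and $\DMN_{k} - \DMN_{k+1}$ (nodes leaving) and compares their cardinalities at consecutive $k$ through the parent--child map: every child of a node entering at level $k$ enters at level $k+1$, so the entering counts can change only where this correspondence breaks down, namely at a DMN-root (parent not in $\DMN$) or at an explicit node (zero or several children), and symmetrically for the leaving sets; each such node accounts for $O(1)$ values of $k$, giving $O(r)$. Your argument replaces this level-by-level matching with a per-chain accounting: downward closure (Lemma~\ref{lem:child_is_dmn}) makes $\DMN$ restricted to each compacted-edge interior a contiguous suffix of the edge, constancy of the first-run length along such a chain makes the starts $|\strt(\cdot)|$ and the ends $\depth(\cdot)+1$ advance in lockstep, and so each of the $O(r)$ chains contributes a difference of two interval indicators with $O(1)$ breakpoints. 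A notable merit of your write-up is that the lockstep fact is proved rather than assumed: the paper's step ``$u \in (\DMN_k - \DMN_{k-1})$ since $k = |\strt(u)|$'' silently uses exactly your observation that a $\DMN$ parent and its child have first runs of equal length, which holds because $\DMN \subseteq \NNodes$. What the paper's formulation buys in exchange is direct reuse in the main theorem: its case analysis hands the algorithm the explicit contribution pairs $(|\strt(v)| - 1, 1)$ and $(\depth(v), -1)$ for $v \in \DMNRoot$, and $(|\strt(u)|, h-1)$ and $(\depth(u) + 1, 1-h)$ for explicit $u \in \DMN$ with $h$ children, whereas from your chains one would still need to read off the equivalent list from chain tops and bottoms --- easy, but an extra step.
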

\begin{proof}
  We prove the lemma by showing that 
  $|\{ k : |\DMN_{k+1} - \DMN_{k}| \neq |\DMN_k - \DMN_{k-1}| \}| = O(r)$
  and 
  $|\{ k : |\DMN_{k} - \DMN_{k+1}| \neq |\DMN_{k-1} - \DMN_{k}| \}| = O(r)$.

  It follows from Lemma~\ref{lem:dmn_for_v} that a node $v$ in $(\DMN_{k+1} - \DMN_{k})$ satisfies $k+1 = |\strt(v)|$.
  If the parent $u$ of $v$ is in $\DMN$, then $u \in (\DMN_k - \DMN_{k-1})$ since $k = |\strt(u)|$.
  In the opposite direction, any child of $u \in (\DMN_k - \DMN_{k-1})$ is in $(\DMN_{k+1} - \DMN_{k})$ by Lemma~\ref{lem:child_is_dmn}.
  Therefore $|\DMN_{k+1} - \DMN_{k}|$ and $|\DMN_k - \DMN_{k-1}|$ can differ only if one of the following conditions holds:
  \begin{enumerate}
    \item there is a node $v \in (\DMN_{k+1} - \DMN_{k})$ whose parent is not in $\DMN$, which means that $v \in \DMNRoot$;
    \item there is a node $u \in (\DMN_{k} - \DMN_{k-1})$ that has no child or more than one children, which means that $u$ is an explicit node.
  \end{enumerate}
  Note that each node $v$ in $\DMNRoot$ contributes to the first case when $k+1 = |\strt(v)|$ and
  each explicit node $u$ can contribute to the second case when $k = |\strt(u)|$.
  Hence $|\{ k : |\DMN_{k+1} - \DMN_{k}| \neq |\DMN_k - \DMN_{k-1}| \}| = O(r)$.

  It follows from Lemma~\ref{lem:dmn_for_v} that a node $v$ in $(\DMN_{k} - \DMN_{k+1})$ satisfies $k = \depth(v)$.
  If the parent $u$ of $v$ is in $\DMN$, then $u \in (\DMN_{k-1} - \DMN_{k})$ since $k-1 = \depth(u)$.
  In the opposite direction, any child of $u \in (\DMN_{k-1} - \DMN_{k})$ is in $(\DMN_{k} - \DMN_{k+1})$ by Lemmas~\ref{lem:child_is_dmn} and~\ref{lem:dmn_for_v}.
  Therefore $|\DMN_{k} - \DMN_{k+1}|$ and $|\DMN_{k-1} - \DMN_{k}|$ can differ only if one of the following conditions holds:
  \begin{enumerate}
    \item there is a node $v \in (\DMN_{k} - \DMN_{k+1})$ whose parent is not in $\DMN$, which means that $v \in \DMNRoot$;
    \item there is a node $u \in (\DMN_{k-1} - \DMN_{k})$ that has no child or more than one children, which means that $u$ is an explicit node.
  \end{enumerate}
  Note that each node $v$ in $\DMNRoot$ contributes to the first case when $k = \depth(v)$ and
  each explicit node $u$ can contribute to the second case when $k-1 = \depth(u)$.
  Hence $|\{ k : |\DMN_{k} - \DMN_{k+1}| \neq |\DMN_{k-1} - \DMN_{k}| \}| = O(r)$.

  Putting all together, $|\{ k : \ddiff_{k+1} \neq \ddiff_{k} \}| = O(r)$.
\end{proof}

The next lemma will be used in our algorithm presented in Section~\ref{sec:algorithm}.
\begin{lemma}\label{lem:max_in_range}
  Assume that $\ddiff_k = \ddiff$ for any $k \in (k'..k'']$.
  Then, $|\Substr_{\gtext}(k)| / k$, in the range $k \in [k'..k'']$, is maximized at $k'$ or $k''$.
\end{lemma}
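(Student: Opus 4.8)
The plan is to combine the bijection of Lemma~\ref{lem:bijection} with the observation that the constant-difference hypothesis forces $|\Substr_{\gtext}(k)|$ to grow \emph{linearly} in $k$ across the interval. Writing $f(k) = |\Substr_{\gtext}(k)| = |\DMN_k|$ (the second equality being Lemma~\ref{lem:bijection}), the assumption $\ddiff_k = \ddiff$ for every $k \in (k'..k'']$ says exactly that the consecutive differences $f(k) - f(k-1)$ are all equal to $\ddiff$ on this range. Telescoping these differences from $k'$ up to any $k \in [k'..k'']$ then yields
\[
f(k) = f(k') + (k - k')\,\ddiff .
\]

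Next I would divide by $k$ and collect the entire $k$-dependence into a single reciprocal term. Expanding and regrouping,
\[
\frac{|\Substr_{\gtext}(k)|}{k} = \frac{f(k') + (k-k')\,\ddiff}{k} = \ddiff + \frac{f(k') - k'\,\ddiff}{k} .
\]
Here the first summand $\ddiff$ is independent of $k$, and the numerator $f(k') - k'\,\ddiff$ of the second summand is a constant that does not depend on the running variable $k$.

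The crux is then a one-line monotonicity argument. Setting $b = f(k') - k'\,\ddiff$, the quantity $f(k)/k = \ddiff + b/k$ is, as a function of the real variable $k > 0$, monotonically non-increasing when $b \ge 0$ and monotonically non-decreasing when $b \le 0$; in either case it is monotone on the whole interval. Since a monotone function attains its maximum over a closed interval at one of the two endpoints, the ratio $|\Substr_{\gtext}(k)|/k$ over the integer points $k \in [k'..k'']$ is maximized at $k'$ or at $k''$, as claimed.

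I expect no genuine obstacle in this argument; the only point requiring care is to match the half-open index range $(k'..k'']$ of the hypothesis with the closed range $[k'..k'']$ of the conclusion when telescoping, making sure the linear formula for $f(k)$ is valid at both endpoints $k'$ and $k''$. Once linearity is established, the maximization reduces entirely to the trivial monotonicity of $\ddiff + b/k$.
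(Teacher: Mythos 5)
Your proof is correct, and it reaches the conclusion by a slightly different (and in fact cleaner) route than the paper. Both arguments rest on the same first step: the constant-difference hypothesis makes $|\Substr_{\gtext}(k)|$ an affine function of $k$ on $[k'..k'']$, i.e., $f(k) = f(k') + (k-k')\ddiff$. From there the paper parametrizes $k = k' + x$, treats the ratio as the real function $f(x) = (|\Substr_{\gtext}(k')| + \ddiff x)/(k'+x)$, and differentiates, concluding that the ratio first decreases and then increases, hence is maximized at an endpoint. Your finishing step instead rewrites the ratio as $\ddiff + \bigl(f(k') - k'\ddiff\bigr)/k$ and observes that a constant plus $b/k$ is monotone in $k$ (in one direction or the other depending on the sign of $b$). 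This is more elementary --- no calculus needed --- and it actually proves a stronger fact: the ratio is monotone on the whole interval, not merely unimodal. It also sidesteps a flaw in the paper's computation: the correct derivative of $(A + \ddiff x)/(k'+x)$ is $(\ddiff k' - A)/(k'+x)^2$, which has constant sign (matching your monotonicity claim), whereas the paper's displayed derivative $(\ddiff x - A k')/(k'+x)^2$ contains an algebra slip; the paper's endpoint conclusion survives regardless, but your derivation is the tighter one. Your care in checking that the telescoping is valid at both endpoints, given the half-open hypothesis range $(k'..k'']$, is exactly the right detail to verify.
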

\begin{proof}
  For any $k \in [k'..k'']$ with $k = k' + x$, $|\Substr_T(k)| / k$ can be represented by
  \[
    f(x) = \frac{|\Substr_{\gtext}(k')| + \ddiff x }{(k' + x)}. 
  \]
  By differentiating $f(x)$ with respect to $x$, we get
  \begin{align*}
    f'(x) = \frac{\ddiff (k' + x) - (|\Substr_{\gtext}(k')| + \ddiff) k'}{(k' + x)^2} = \frac{\ddiff x - |\Substr_{\gtext}(k')| k'}{(k' + x)^2}.
  \end{align*}
  Assessing sings of $f'(x)$, it turns out that $f(x)$ monotonically decreases for $x \le |\Substr_{\gtext}(k')| k' / \ddiff$ and increases for $x \ge |\Substr_{\gtext}(k')| k' / \ddiff$,
  and hence, $|\Substr_{\gtext}(k)| / k$, in the range $[k'..k'']$, is maximized at $k'$ or $k''$.
\end{proof}

\section{Algorithm}\label{sec:algorithm}
Based on the properties of $\DMN$ established in Section~\ref{sec:properties},
we present an algorithm, given $r$-size run-length compressed string $\gtext$,
to compute $\delta$ in $\ramsort(r, n)$ time and $O(r)$ space.

We first build the r-suffix tree of $\gtext$.
\begin{lemma}\label{lem:rstree}
  Given a string $\gtext$ in run-length compressed form of size $r$,
  the r-suffix tree of $\gtext$ of size $O(r)$ can be computed in $\ramsort(r, n)$ time and $O(r)$ space.
\end{lemma}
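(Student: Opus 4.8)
The plan is to build the r-suffix tree in three stages: (i) sort the $r$ suffixes of $\gtext$ that start at run boundaries, (ii) compute the longest common prefix lengths of lexicographically adjacent such suffixes, and (iii) assemble the compacted trie from the sorted order together with these LCP values. Throughout, each suffix is identified with the index of the run at which it starts, so it occupies $O(1)$ words, and every edge label of the resulting tree is stored as a pointer into the run-length encoding of $\gtext$; this keeps the whole construction within $O(r)$ space, since at most $r$ run-boundary suffixes produce at most $r$ leaves and at most $r$ branching nodes.

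First I would reduce the lexicographic comparison of two run-boundary suffixes to a comparison over the sequence of runs. Writing $\gtext = c_1^{e_1} c_2^{e_2} \cdots c_r^{e_r}$, the suffix starting at run $j$ is $c_j^{e_j} c_{j+1}^{e_{j+1}} \cdots c_r^{e_r}$. Comparing the suffixes at runs $j$ and $j'$: if $c_j \neq c_{j'}$ the order is decided immediately; if $c_j = c_{j'}$ but $e_j \neq e_{j'}$, the order is decided by comparing the character that follows the shorter run against $c_j$ (recall that consecutive runs carry distinct characters, and that the end of $\gtext$ compares as smaller than any character); and if $c_j = c_{j'}$ and $e_j = e_{j'}$ the comparison recurses to runs $j+1$ and $j'+1$. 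I would encode these rules by renaming the runs into a string of length $O(r)$ over an integer alphabet---assigning ranks to the characters $c_j$ and to the exponents $e_j$, and recording for each run a flag indicating whether the following character is smaller or larger than $c_j$---so that the ordinary lexicographic suffix order of the renamed string reproduces the order of the run-boundary suffixes of $\gtext$. The only step that is not already linear is the renaming, which amounts to sorting $r$ integers of $O(\lg n)$ bits and therefore costs $\ramsort(r, n)$ time; a linear-time integer-alphabet suffix array construction (e.g. SA-IS or DC3) then yields the sorted order in $O(r)$ time and $O(r)$ space.

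Next I would turn the sorted order into character-level $\LCP$ values. From the suffix array of the renamed string I obtain, for each pair of adjacent suffixes, the number of leading runs they share; using a precomputed prefix-sum array of the exponents $e_1, \dots, e_r$ I convert this into a character length in $O(1)$ time per pair, adding the partial overlap $\min(e_j, e_{j'})$ of the first run on which the two suffixes diverge whenever that run carries a common character. Finally, feeding the sorted suffixes and their $\LCP$ values into the standard stack-based tree construction builds the compacted trie, i.e. the r-suffix tree, in $O(r)$ time, labelling each created edge by a pointer to the corresponding fragment of the run-length encoding.

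The main obstacle I expect is the correctness and faithfulness of the renaming in the second stage: the comparison rule for two suffixes whose leading runs agree in character but differ in exponent is genuinely non-lexicographic at the level of run symbols, and the encoding must be designed so that this case is resolved by the ``smaller/larger following character'' flag rather than by a direct comparison of the exponents. Making this encoding simultaneously order-preserving, of size $O(r)$, and computable within $\ramsort(r, n)$ time is the crux; the remaining stages are routine adaptations of textbook suffix-array and suffix-tree machinery, and the total cost is dominated by the single integer sort, giving the claimed $\ramsort(r, n)$ time and $O(r)$ space.
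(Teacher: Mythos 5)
Your route is correct and genuinely different from the paper's, but the step you yourself flag as ``the crux'' is the one place where your proof is not yet a proof: you assert, without exhibiting it, a renaming whose ordinary lexicographic suffix order equals the true order of the run-boundary suffixes. Such a renaming does exist, and here is one way to finish your argument. Writing $\gtext = c_1^{e_1}\cdots c_r^{e_r}$, give run $j$ the flag $d_j = 0$ if $j = r$ or $c_{j+1} < c_j$, and $d_j = 1$ if $c_{j+1} > c_j$, and the adjusted exponent $\tilde e_j = e_j$ if $d_j = 0$ and $\tilde e_j = n - e_j$ if $d_j = 1$; then rank the $O(\lg n)$-bit triples $(c_j, d_j, \tilde e_j)$ by a single integer sort ($\ramsort(r,n)$ time) and use the ranks as the symbols of the renamed string. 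Equal symbols mean identical runs, so recursing to the next run is exactly what the true comparison does; if the characters agree but the flags differ, the $d=0$ suffix is correctly placed first (its continuation is below $c_j$, the other's is above, or it ends); and if the characters and flags agree but the exponents differ, the ascending order for $d=0$ and descending order for $d=1$ reproduce precisely your ``compare the character after the shorter run against $c_j$'' rule. With that in place, your LCP conversion (prefix sums of exponents plus $\min(e_j,e_{j'})$ when the first mismatching runs share a character -- note this also covers the case of equal $(c,e)$ but different flags) and the stack-based assembly of the compacted trie are routine and run in $O(r)$ time.

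It is worth contrasting this with the paper's proof, which sidesteps order preservation entirely: it ranks runs simply by the pair $(c, e)$ -- an order that does \emph{not} agree with the lexicographic order of the underlying suffixes -- builds the suffix tree (not the suffix array) of the resulting meta-string in $O(r)$ time by Farach's algorithm for integer alphabets, and then repairs the topology in $O(r)$ time by merging, at each node, common prefixes of outgoing edges whose first runs share a character but differ in exponent. So the paper pays with a post-processing merge on the tree where you pay with a more delicate order-preserving encoding; in exchange, your construction yields children in true lexicographic order, which the paper's tree need not (and need not) provide. Both constructions are dominated by one sort of $r$ integers of $O(\lg n)$ bits, giving the claimed $\ramsort(r,n)$ time and $O(r)$ space.
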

\begin{proof}
  Let $w$ be a string of length $r$ that is obtained by replacing each run of $\gtext$ with a meta-character in $[1..r]$,
  where the meta-character of a run $c^e$ is determined by the rank of the run sorted over all runs in $\gtext$ 
  using the sorting key of the pair $(c, e)$ represented in $O(\lg n)$ bits.
  Since $(c, e)$ is represented in $O(\lg n)$ bits, we can compute $w$ in $\ramsort(r, n)$ time and $O(r)$ space.
  Then we build the suffix tree of $w$ in $O(r)$ time and space using any existing linear-time algorithm 
  for building suffix trees over integer alphabets (use e.g.~\cite{Farach1997OST}).
  Since runs with the same character but different exponents have different meta-characters, 
  we may need to merge some prefixes of edges outgoing from a node.
  Fixing this in $O(r)$ time, we get the r-suffix tree of $\gtext$.
\end{proof}

Note that we can reduce $O(\lg n)$-bit characters to $O(\lg r)$-bit characters during the process of sorting in the proof of Lemma~\ref{lem:rstree}.
From now on we assume that a character is represented by an integer in $[1..r]$.
In particular, pointers to some data structures associated with for each character can be easily maintained in $O(r)$ space.

We augment the r-suffix tree in $O(r)$ time and space so that we can support \emph{longest common prefix queries}
that compute $\LCP$ value for any pair of suffixes starting with run's boundary in constant time.
We can implement this with a standard technique that employs lowest common ancsestor queries over suffix trees (see e.g.~\cite{GusfieldBook},
namely, we just compute the string depth of the lowest common ancestor of two leaves corresponding to the suffixes.

Using this augmented r-suffix tree, we can compute the set $\DMNRoot$ of DMN-roots in $O(r)$ time.
\begin{lemma}\label{lem:compute_dmnroot}
  $\DMNRoot$ can be computed in $O(r)$ time.
\end{lemma}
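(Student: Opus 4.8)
The plan is to reduce membership in $\DMN$ to a purely local test on the r-suffix tree and then to read off $\DMNRoot$ as the set of minimal nodes of $\DMN$. For a node $v$, write $c = \str(v)[1]$ and let $c^{\ell}$ be the first run of $\str(v)$, so that $\str(v) = c^{\ell}\beta$ with $\beta$ empty or not starting with $c$; note $\strs(v) = c^{\ell-1}$ and $\strt(v) = c\beta$. The matching nodes for $\strt(v)$ are exactly the nodes representing $c^{m}\beta$ for $m \ge 1$, so by Lemma~\ref{lem:dmn_for_v} we have $v \in \DMN$ iff $v \in \NNodes$ and no node represents $c^{m}\beta$ for any $m > \ell$; equivalently, $\ell$ is the largest number of copies of $c$ that can precede an occurrence of $\beta$.

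First I would make the test ``$c^{m}\beta$ is a node'' checkable. Deleting the first run $c^{\ell}$ from a suffix realizing $v$ leaves a string that again starts at a run boundary, hence is one of the indexed suffixes; conversely $c^{m}\beta$ is a node of $\rST$ iff some indexed suffix whose immediately preceding run equals $(c,m)$ has $\beta$ as a prefix, i.e.\ lies below the locus of $\beta$. I would therefore tag every leaf with its immediately preceding run $(a,b)$ (read off from the run-length encoding in $O(r)$ time), and, writing $\tau(v)$ for the locus of $\beta$ obtained from $v$ by deleting the whole first run (navigable with $\LCP$/ancestor queries on the augmented r-suffix tree), record that $v \in \DMN$ iff $v \in \NNodes$ and $\ell = M(\tau(v),c)$, where $M(\mu,c)$ is the maximal $b$ over leaves below $\mu$ carrying a tag $(c,b)$. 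Since the character preceding $\beta$ always differs from $\beta[1]$, each locus $c^{M(\mu,c)}\beta$ has first run exactly $c^{M(\mu,c)}$, and one checks that $\DMN$ is precisely the collection of these \emph{deepest-matching loci}, with the empty tail (the root) contributing the pure-run nodes, for which membership in $\NNodes$ holds automatically.

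Finally I would extract $\DMNRoot$. By Lemma~\ref{lem:child_is_dmn} the set $\DMN$ is downward closed, so it is a union of bottom subtrees and $\DMNRoot$ is its set of minimal nodes. Because $M(\cdot,c)$ is constant along every edge and changes only at explicit branching nodes, each leaf tagged $(c,b)$ witnesses the maximum for a contiguous upward path of tails $\beta$ --- from that leaf up to its nearest common ancestor with a $c$-tagged leaf of strictly larger $b$ --- which, after prepending the fixed run $c^{b}$, becomes one contiguous segment of $\DMN$ nodes. Overlaying these $O(r)$ segments on $\rST$ (after intersecting with $\NNodes$, whose complement consists only of the pure-run nodes on the root spine) and taking the topmost nodes having no $\DMN$ ancestor yields $\DMNRoot$; a minimal node may be implicit, in which case I represent it by its enclosing edge together with a string depth.

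The hard part will be carrying out the per-character maximization and the segment overlay in $O(r)$ total time and space: a naive bottom-up merge of the maxima $M(\chi,c)$ could touch $\Omega(\sum_v \depth(v))$ node/character incidences. I would avoid this by never materializing all the maxima, and instead, for each character $c$ separately, scanning its tagged leaves in increasing order of $b$ and using $\LCP$ queries together with a union-find (or a stack over the suffix-ordered leaves) to find, for each tagged leaf, the nearest deeper $c$-tagged leaf with a larger run length; this pins down the single contiguous segment of deepest-matching $\DMN$ nodes that the leaf contributes, for $O(r)$ segments in all. Proving that this charging is injective, that the segments exactly cover $\DMN$, and that the implicit DMN-roots emerge correctly from merging the segments is the delicate accounting that the proof must nail down.
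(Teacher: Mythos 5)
Your proposal is correct and takes essentially the same approach as the paper: your tagging of leaves by their preceding run, the per-character processing in increasing order of run length with nearest-larger-neighbor searches and constant-time $\LCP$ queries (via a list/union-find over the $\rfr(\cdot)$-lexicographic order), and the final extraction of the minimal $\DMN$ nodes correspond exactly to the paper's lists $L_c$, its Algorithm for computing $\bnode(\cdot)$, and its closing tree traversal --- indeed, your segment charged to a tagged leaf is precisely the set of nodes strictly below $\bnode(l)$ on the corresponding root-to-leaf path. The ``delicate accounting'' you defer is exactly what the paper settles with the formula $|\bnode(l')| = e + \max\bigl(\LCP(\rfr(l'), \rfr(p)), \LCP(\rfr(l'), \rfr(s))\bigr)$ and an $O(r)$ traversal to deduplicate the resulting DMN-roots.
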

\begin{proof}
  For each leaf $l$, we compute, in the root-to-leaf path, the deepest node $\bnode(l)$ satisfying that 
  there exists $v$ such that $\strt(\bnode(l)) = \strt(v)$ and $\strs(\bnode(l))$ is a proper prefix of $\strs(v)$.
  If $\bnode(l)$ is not a leaf, the child of $\bnode(l)$ (along the path) is the DMN-root on the path.
  Since $\str(\bnode(l))$ and $\str(v)$ are same if we remove their first run, 
  we can compute the longest one by using $\LCP$ queries on the pair of leaves having the same character in the previous run.
  Let $\rfr(l)$ denote the string that can be obtained by removing the first run from $\str(l)$.

  For any character $c$, let $L_{c}$ be the doubly linked list of the leaves starting with the same character $c$ 
  and sorted in the lexicographic order of $\rfr(\cdot)$ (remark that it is not lex.\ order of $\str(\cdot)$).
  Such lists for ``all'' characters $c$ can be computed in a batch in $O(r)$ time by scanning all leaves in the lexicographic order
  and appending a leaf $l$ to $L_{c}$ if the previous character of the suffix $\str(l)$ is $c$.

  Now we focus on the leaves that start with the same character $c$.
  Given $L_{c}$, Algorithm~\ref{algo:dmnroot} computes $|\bnode(\cdot)|$'s in increasing order of the exponents of their first runs.
  When we process a leaf $l$, every leaf with a shorter first run than $l$ is removed from the list so that
  we can efficiently find two lexicographically closest leaves (in terms of $\rfr(\cdot)$ lex.\ order) with longer first runs.
  Let $e$ be the exponent of the first run of $\str(l)$.
  By a linear search to lex.\ smaller (resp.\ larger) direction from $l$ in the current list, 
  we can find lex.\ predecessor $p$ (resp.\ lex.\ successor $s$) that have longer first run than $e$
  (just ignore it if such a leaf does not exist or set sentinels at the both ends of the list).
  Note that the exponent of the first run of $l'$ is $e$ for any leaf $l'$ in between $p$ and $s$.
  Then, we compute $|\bnode(l')| = e + \max (\LCP(\rfr(l'), \rfr(p)), \LCP(\rfr(l'), \rfr(s)))$ and remove $l'$ from the list.
  We can process all the leaves in $L_{c}$ in linear time 
  since any leaf $l$ in $L_{c}$ is visited once and removed from the list after we compute $|\bnode(l)|$ by two $\LCP$ queries.

  After computing $|\bnode(\cdot)|$ for all leaves in $O(r)$ total time, 
  it is easy to locate the nodes of $\DMNRoot$ by traversing the r-suffix tree in $O(r)$ time.
\end{proof}

\begin{algorithm2e}[t]
\caption{How to compute $|\bnode(\cdot)|$ for the leaves in $L_{c}$.}
\label{algo:dmnroot}
\SetKw{True}{true}
\SetKw{Or}{or}
\SetKw{Output}{output}
\SetKw{Return}{return}
\KwIn{Doubly linked list $L_{c}$ of the leaves starting with character $c$ and sorted in the lexicographic order of $\rfr(\cdot)$.}
\KwOut{$|\bnode(\cdot)|$ for the leaves in $L_{c}$.}
$L \leftarrow L_{c}$\tcc*{initialize tentative list $L$ by $L_{c}$}
\ForEach{$l$ in $L_{c}$ in increasing order of the exponents of their first runs}{
  \lIf{$l$ is removed from $L$}{
    continue
  }
  $e \leftarrow$ the exponent of the first run of $l$\;
  compute predecessor $p$ of $l$ (in $\rfr(\cdot)$ lex.\ order) that has longer first run than $e$\;
  compute successor $s$ of $l$ (in $\rfr(\cdot)$ lex.\ order) that has longer first run than $e$\;
  \ForEach{$l'$ existing in between $p$ and $s$ in $L$}{
    \Output $|\bnode(l')| = e + \max (\LCP(\rfr(l'), \rfr(p)), \LCP(\rfr(l'), \rfr(s)))$ for $l'$\;
    remove $l'$ from $L$\;
  }
}
\end{algorithm2e}

We finally come to our main contribution of this paper.
\begin{theorem}
  Given $r$-size run-length compressed string $\gtext$ of length $n$,
  we can compute $\delta$ in $\ramsort(r, n)$ time and $O(r)$ space.
\end{theorem}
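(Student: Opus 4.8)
The plan is to combine the structural lemmas with a single sweep over $O(r)$ candidate lengths. First I would build the augmented r-suffix tree and compute $\DMNRoot$, which by Lemmas~\ref{lem:rstree} and~\ref{lem:compute_dmnroot} costs $\ramsort(r,n)$ time and $O(r)$ space. By Lemma~\ref{lem:bijection} we have $\delta = \max\{|\Substr_{\gtext}(k)|/k : 1 \le k \le n\} = \max\{|\DMN_k|/k : 1 \le k \le n\}$, so it remains to maximize $|\DMN_k|/k$. By Lemma~\ref{lem:diff} the quantity $\ddiff_k = |\DMN_k| - |\DMN_{k-1}|$ changes at only $O(r)$ values of $k$, which partition $[1..n]$ into $O(r)$ intervals on which $\ddiff_k$ is constant; by Lemma~\ref{lem:max_in_range}, on each such interval $|\DMN_k|/k$ attains its maximum at an endpoint. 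Hence it suffices to evaluate $|\DMN_k|/k$ at the $O(r)$ interval endpoints and return the largest value.

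The remaining task is to obtain $|\DMN_k|$ at these $O(r)$ endpoints without ever listing the nodes of $\DMN$ explicitly, since $\DMN$ may contain $\Theta(n)$ non-branching nodes. Here I would use Lemma~\ref{lem:child_is_dmn}: $\DMN$ is the disjoint union of the subtrees rooted at the nodes of $\DMNRoot$, and all nodes in the subtree of a DMN-root $\rho$ share the same first run, of some length $a_\rho$. Consequently every node $v$ in that subtree satisfies $|\strt(v)| = \depth(v) - a_\rho + 1$, and by Lemma~\ref{lem:dmn_for_v} it contributes $1$ to $|\DMN_k|$ exactly for $k \in [\depth(v) - a_\rho + 1 .. \depth(v)]$. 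I would therefore describe the contribution of each subtree through its depth distribution, represented compactly by its edges: an edge $(u,w)$ of the r-suffix tree accounts for one node (explicit or non-branching) at every depth in $[\depth(u)+1 .. \depth(w)]$, and summing the length-$a_\rho$ contributions over such a depth interval gives a trapezoidal, i.e.\ piecewise-linear, function of $k$ with $O(1)$ breakpoints. Emitting these $O(1)$ breakpoints as second-difference events per edge (plus one degenerate event for $\rho$ itself at depth $\depth(\rho)$) yields $O(r)$ events in total; sorting their positions in $\ramsort(r,n)$ time and sweeping them with a double prefix sum reconstructs $|\DMN_k|$ as a piecewise-linear function, and in particular its value at every endpoint, in $O(r)$ additional time and space.

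Putting the pieces together, every stage runs in $O(r)$ time and space except the two sorting steps, namely building the r-suffix tree and ordering the $O(r)$ events, each of which costs $\ramsort(r,n)$; the total is therefore $\ramsort(r,n)$ time and $O(r)$ space. The step I expect to be the main obstacle is the aggregate accounting for the non-branching nodes: because $|\DMN|$ can be linear in $n$, their contributions must be folded into the edge-based, piecewise-linear representation exactly, and one must verify that the breakpoints produced this way coincide with (or refine) the constant-$\ddiff_k$ intervals guaranteed by Lemma~\ref{lem:diff}, so that Lemma~\ref{lem:max_in_range} legitimately restricts the search to the emitted endpoints. Careful handling of the window shift by $a_\rho$ at the two ends of each edge, and of the boundary cases at $k=1$ and $k=n$, is where the routine-looking bookkeeping actually has to be checked.
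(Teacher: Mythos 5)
Your proposal is correct and takes essentially the same approach as the paper: build the r-suffix tree and compute $\DMNRoot$ (Lemmas~\ref{lem:rstree} and~\ref{lem:compute_dmnroot}), restrict the maximization of $|\Substr_{\gtext}(k)|/k$ to the $O(r)$ breakpoints of $\ddiff_k$ (Lemmas~\ref{lem:diff} and~\ref{lem:max_in_range}), and recover $|\Substr_{\gtext}(k)|$ at those points by sorting $O(r)$ difference events and sweeping with prefix sums. The only divergence is bookkeeping of the events: the paper reads them off the accounting in the proof of Lemma~\ref{lem:diff} (each DMN-root $v$ contributes $\pm 1$ at $k=|\strt(v)|-1$ and $k=\depth(v)$, each explicit node $u\in\DMN$ with $h$ children contributes $\pm(h-1)$ at $k=|\strt(u)|$ and $k=\depth(u)+1$), whereas your per-edge trapezoid events form a refinement of the same second-difference representation of $|\DMN_k|$ that telescopes to the paper's, and evaluating at a superset of the true constant-$\ddiff$ breakpoints is harmless.
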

\begin{proof}
  Thanks to Lemma~\ref{lem:max_in_range}, in order to compute $\delta$ it suffices to take the maximum of $|\Substr_{\gtext}(k)|/k$ 
  for $k$ at which $\ddiff_{k}$ and $\ddiff_{k+1}$ differ.
  According to the proof of Lemma~\ref{lem:diff}, we obtain $\{ k : \ddiff_{k+1} \neq \ddiff_{k} \}$
  by computing the nodes contributing the changes of $\ddiff_k$, which are the DMN-roots and explicit nodes.
  As shown in Lemmas~\ref{lem:rstree} and~\ref{lem:compute_dmnroot}, DMN-roots can be computed in $\ramsort(r, n)$ time and $O(r)$ space.
  Note that the contributions of $v \in \DMNRoot$ to $\ddiff_{k+1} - \ddiff_k$ are $1$ at $k = |\strt(v)| - 1$ and $-1$ at $k = \depth(v)$,
  and the contributions of an explicit node $u$ in $\DMN$ to $\ddiff_{k+1} - \ddiff_k$  are $h-1$ at $k = |\strt(u)|$ and $1-h$ at $k = \depth(u) + 1$,
  where $h$ is the number of children of $u$.
  We list the information by the pairs $(|\strt(v)| - 1, 1), (\depth(v), -1) \}, (|\strt(u)|, h-1)$ and $(\depth(u) + 1, 1-h)$
  and sort them in increasing order of the first element in $\ramsort(r, n)$ time.
  We obtain $\{ k : \ddiff_{k+1} \neq \ddiff_{k} \}$ by the set of the first elements,
  and we can compute $\ddiff_{k+1} - \ddiff_{k}$ by summing up the second elements for fixed $k$.
  By going through the sorted list, we can keep track of $|\Substr_{\gtext}(k')|$ for $k' \in \{ k : \ddiff_{k+1} \neq \ddiff_{k} \}$,
  and thus, compute $\delta = \max \{ |\Substr_{\gtext}(k')|/k' : k' \in \{ k : \ddiff_{k+1} \neq \ddiff_{k} \} \}$ in $O(r)$ time and space.
\end{proof}

\subparagraph*{Acknowledgements.}
This work was supported by JSPS KAKENHI Grant Number 22K11907.

\clearpage
\bibliography{refs}

\end{document}